\newcommand\Zset{\mathbb{Z}}
\newcommand\Nset{\mathbb{N}}
\newtheorem{defn}{Definition}
\newtheorem{thm}{Theorem}
\newtheorem{lem}[thm]{Lemma}
\newtheorem{prop}{Proposition}
\newcommand\Aaut{\mathfrak{A}} 
\newcommand\Baut{\mathfrak{B}} 
\newcommand\Tri{\bullet} 
\newcommand\definecal[1] 
\newcommand
\DeclareMathOperator{\Sup}{Sup} 
\newcommand\pplus{\oplus} 
\newcommand\bigpplus{\bigoplus} 
\newcommand\Back{\mathfrak{B}} 
\newcommand\Part{\mathfrak{P}} 
\newcommand\Col{\mathfrak{C}} 
\def\crouge{\includegraphics{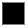}}
\def\cbleu{\includegraphics{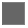}}
\def\cblanc{\includegraphics{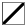}}
\title{A Particular Universal Cellular Automaton}
\author{Nicolas Ollinger \qquad \qquad \qquad \quad Gaétan Richard 
    \email{nicolas.ollinger@lif.univ-mrs.fr \qquad gaetan.richard@lif.univ-mrs.fr}
    \institute{ Laboratoire d'informatique fondamentale de Marseille (LIF),\\
    Aix-marseille Université, CNRS, \\
    39 rue Joliot-Curie, 13013 Marseille, France.
    \thanks{Work supported by a grant of the French ANR}
    }
}
\begin{document}
\maketitle

\begin{abstract}
  Signals are a classical tool used in cellular automata constructions
  that proved to be useful for language recognition or firing-squad
  synchronisation. Particles and collisions formalize this idea one
  step further, describing regular nets of colliding signals. In the
  present paper, we investigate the use of particles and collisions
  for constructions involving an infinite number of interacting
  particles. We obtain a high-level construction for a new smallest
  intrinsically universal cellular automaton with 4 states.
\end{abstract}



\section*{Introduction}
\label{sec:intro}

Cellular automata were introduced by
J.~von~Neumann~\cite{Neumann:1966} in the 40s to study
self-reproduction. They consist of a parallel computation model in
discrete time based on an infinite grid of regular \emph{cells}
(endowed with a \emph{state} chosen among a finite alphabet)
interacting synchronously with each other. It is well known that even
simple local interaction can lead to very complex global behavior. One
open question is to find simple local function leading to a global
rule, which is capable either of simulating any Turing machine or
embedding any other cellular automata. Such automata are called
respectively Turing or intrinsically universal. For a detailed survey
of universality of cellular automata, see \cite{Ollinger:2008}.

For dimension 2 or above, the intrinsically universal cellular
automaton constructed by E.~R.~Banks~\cite{Banks:1970} (2 states) is
optimal. For dimension one, the result of M.~Cook~\cite{Cook:2004} (2
states) is optimal for the case of Turing universality whereas the
previous best result on intrinsic universality is with 6 states
\cite{Ollinger:2002}.

All the previously mentioned constructions use structures known as
particles and collisions. Those elements, which are commonly at the
core of algorithmic on cellular automaton constructions, have been
studied in-depth by J.~Mazoyer and V.~Terrier
\cite{Mazoyer-Terrier:1999} . For constructions, a formal approach has
been proposed in \cite{Ollinger-Richard:2007}. With this approach, it
was possible to obtain \cite{Richard:2008} a new high-level proof of
the result of M.~Cook. In this paper, we use this formalism to
construct quite simply an intrinsically universal cellular automaton
with 4 states. The local transition function of the automaton is
constructed to ensure the presence of particles and collisions with a
predictable behavior. The technical part of the proof is to encode
information with those elements to simulate any cellular
automaton. Thus this proof is similar to complexity NP-completeness
proofs.

In section~\ref{sec:ca-univ-self}, we first recall definitions of
cellular automata and different types of universality and then
introduce the formalism used for particles and collisions. In
section~\ref{sec:block}, we construct the rule of the cellular
automaton and show how particles and collisions can be used. In
section~\ref{sec:univ}, we discuss in more details the encoding used
and finish the proof.

\section{Cellular automata, universality and self-organisation}
\label{sec:ca-univ-self}

In this paper, we only consider cellular automata in dimension one,
with the three nearest neighbours. Therefore, a \emph{cellular
  automaton} is a pair $(S,f)$ where $S$ is a finite set of
\emph{states} and $f: S^3 \to S$ is the \emph{local transition
  function}. The automaton acts on a configuration $c \in \S^\Zset$ at
time $t$ by synchronously applying the local transition function
$f(c_{i-1,t},c_{i,t},c_{i+1,t})$ to each cell $c_i$ in $c$ to obtain
the value $c_{i,t+1}$. The result of the global application of $f$ to
$c$ is denoted by $F(c)$.

The automaton acts on $c=(c_i)_{i \in \Zset} \in S^\Zset$
(called \emph{configurations}) by $F(c)_i =
f(c_{i-1},c_i,c_{i+1})$. Moreover, a cellular automaton is
\emph{one-way} if the local function does not depend on its third
argument. A \emph{space-time diagram} $D$ is, in our case, a
bi-infinite sequence of configurations obtained in the dynamics of
cellular automaton. More formally, it can be seen as an element of
$S^{\Zset^2}$ satisfying for all $i,j \in \Zset,
D(i,j+1)=f\left(D(i-1,j),D(i,j),D(i+1,j)\right)$.

Unlike many other known computation systems, cellular automata act on
infinite configurations and do not posses any halting property. These
two properties have led to several different definitions of
universality. One position is to simulate Turing machine by using
``ad-hoc'' properties. A general scheme of different suggested
properties was summarised by B.~Durand and Zs.~R\'{o}ka
\cite{Durand-Roka:1999}. \emph{Turing universality} requires that, for
each Turing machine, there exist a computable function that transforms
each input word of the Turing machine into a ultimately periodic
initial configuration and a pattern such that the initial
configuration eventually containing the pattern if and only if the
Turing machine eventually halts on the input word.

The other approach is to simulate any other cellular automata
evolution. Intuitively, a cellular automaton simulates another
cellular automaton if the space-time diagram of the latter can be
``embedded'' into the former.  Formally, a cellular automaton
$\Aaut=(S,f)$ \emph{simulates} a cellular automaton $\Baut=(T,g)$ if
there exists $m,m',t,t' \in \Nset^+$, $s \in \Zset$ and a surjective
encoding function $e: S^{m} \to T^{m'}$ such that, for any
configuration $c \in T^\Zset$, $\sigma^s(F^t(e^{-1}(c))) \subset
e^{-1}(G^{t'}(c))$ where $\sigma : S^\Zset \to S^\Zset$ is the
\emph{shift} defined for all $c \in S^\Zset$ and $i \in \Nset$ by
$\sigma(c)(i)=c(i+1)$.

\begin{defn}
  A cellular automaton is \emph{intrinsically universal} if it can
  simulate any other cellular automaton.
\end{defn}

The notion of intrinsic universality is stronger than the notion of
Turing universality in the sense that every intrinsic universal
cellular automata is Turing universal but the converse is false. In
the literature, two versions of intrinsic universality exist: one
stronger form has the additional request that encoding be one-one. In
fact most known constructions of universal cellular automata (except
the one of M.~Cook) achieve this stronger definition of
universality. In this paper, we shall not enforce
injectivity. Intuitively, this allows us to have some ``junk'' as long
as it does not interfere with the simulation. Since we are interested
mostly in the regularity of simulation, the two notions are equally
interesting in our case. Furthermore, in a formal point of view, it is
still not known whether they are the same.

As many others constructions, we heavily rely on particles and
collisions to encode and compute information. To define these objects,
we use the approach developed in \cite{Ollinger-Richard:2007} seeing
space-time diagrams as tilings of $S^{\Zset^2}$ with local
constraints. Therefore, we need to introduce some concepts from
discrete geometry: a \emph{coloring} $\Ccal$ is an application from a
subset $Sup(\Ccal) \subset \Zset^2$ to $S$. Such coloring is
\emph{finite} if $Sup(\Ccal)$ is finite. Three natural operations on
colorings are \emph{translation} of a coloring $\Ccal$ by a vector $u
\in \Zset^2$ defined by $(u \cdot \Ccal)(z+u)= \Ccal(z)$;
\emph{disjoint union} of two colorings $\Ccal$ and $\Ccal'$ with
$\Sup(\Ccal) \cap \Sup(\Ccal') = \emptyset$ defined by $\Ccal \pplus
\Ccal'(z)=\Ccal(z) (\textrm{resp. } \Ccal'(z)) $ for all
$z\in\Sup(\Ccal') (\textrm{resp. } \Sup(\Ccal'))$; at last,
\emph{restriction} of a coloring $\Ccal$ to $D \subset \Zset^2$ is
denoted by $\Ccal_{\mid D}$. With those elements, let us give
definitions of the three used structures of self-organisation.

\begin{figure}[htbp]
  \centering
  \begin{tabular}{c p{1cm} c p{1cm} c}
    \includegraphics[height=35pt]{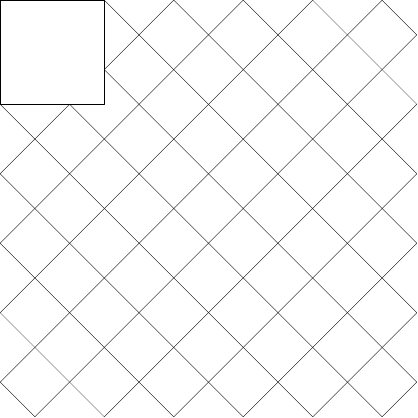} &&
    \includegraphics[height=35pt]{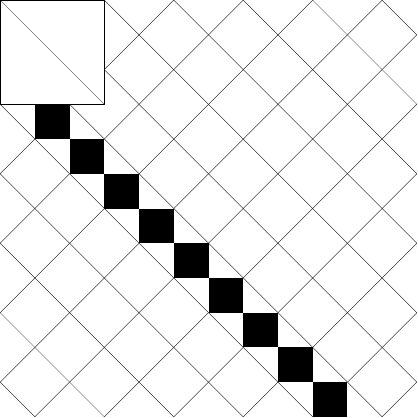} &&
    \includegraphics[height=35pt]{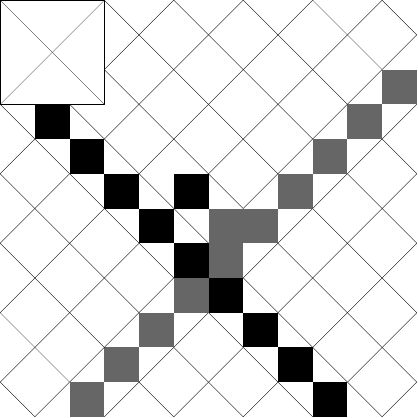} \\
    (a) A background && (b) A particle && (c) A collision \\
  \end{tabular}
  \caption{Examples of structures present in self-organisation}
  \label{fig:Self-org-struct}
\end{figure}

Structures are two-dimensional \emph{backgrounds} (see
Fig.\ref{fig:Self-org-struct}a) which are triplets $\Back=(\Ccal,u,v)$
where $\Ccal$ is a finite coloring and $u,v$ two non-collinear vectors
ensuring that $\bigpplus_{i,j \in \Zset^2} (iu+jv)\cdot\Ccal$ is a
space-time diagram (this space-time diagram is often also referred as
$\Back$ when no confusion is possible). These structures mostly serve
to ``fill'' empty space in constructions. Among backgrounds,
one-dimensional particles travel (see
Fig.\ref{fig:Self-org-struct}b). \emph{Particles} are quadruplets
$\Part = (\Ccal,u,\Back_l,\Back_r)$ where $\Ccal$ is a finite
coloring, $u$ a vector, $\Back_l$ and $\Back_r$ two backgrounds
ensuring that $\Ical = \bigpplus_{k \in \Zset} ku\cdot\Ccal$ separates
the plane in two 4-connected domains $L$ and $R$ ($L$ being the
left-one according to $u$) so that $\Back_{\mid L} \pplus \Ical \pplus
\Back'_{\mid R}$ is a space-time diagram, where $\Back_{\mid L}$
design the restriction of the coloring induced by $\Back$ on
$L$. Particles are often used to convey part of information either
alone or in groups consisting of parallel particles. Last but not
least, a \emph{collision} (see Fig.~\ref{fig:Self-org-struct}c) is a
pair $(\Ccal,L)$ where $\Ccal$ is a finite coloring, $L$ is a finite
sequence of $n$ particles $\Part_i= (\Back_i,\Ccal_i,u_i,\Back'_i)$,
satisfying the following conditions: first, consecutive particles on
the list agree on their common background (for all
$i\in\Zset_n,\quad\Back'_i = \Back_{i+1}$), then particles and finite
perturbation form a star ( $\Ical = \Ccal\pplus\bigpplus_{i\in
  \Zset_n, k \in \Nset} k u_i\cdot\Ccal_i$ cuts the plane in $n$
4-connected zones and For all $i\in\Zset_n$, $\Ccal
\pplus\bigpplus_{k\in\Nset}\left( k u_i\cdot\Ccal_i\pplus k
  u_{i+1}\cdot\Ccal_{i+1}\right)$ cuts the plane in two 4-connected
zones. Let $P_i$ be the one right of $\Part_i$) and at last, $\Col
=\Ical\pplus\bigpplus_{i} {\Back_i}_{| P_i}$ is a space-time diagram.

To describe easily complex space-time diagrams, one idea is to
symbolise particles as lines and collisions as points, giving birth to
a planar map called catenation scheme as the one in upper-left corner
of Fig.~\ref{fig:global}. Formally, a \emph{catenation scheme} is a
planar map whose vertices are labeled by collisions and edges by
particles which are coherent with regards to collisions. Since
catenation schemes are symbolic representations, it is not clear that
there exist associated space-time diagrams. In fact, to go back from a
catenation scheme to a ``real'' space-time diagram, one must give
explicit relative positions of collisions as, for example, by giving
the number of repetitions for each particle (edge) of the scheme. Such
a set of integers is called \emph{affectation} and is \emph{valid} if
the resulting object is a space-time diagram. The main result given
for catenation schemes is that set of valid affectations can be
computed from finite catenation schemes.

\begin{thm}[\cite{Ollinger-Richard:2007}]
  \label{th:cat}
  Given a finite catenation scheme, the set of valid affectations is a
  computable semi-linear set.
\end{thm}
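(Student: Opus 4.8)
The plan is to translate the requirement ``the resulting object is a space-time diagram'' into a finite system of linear equalities and inequalities over $\Zset$ in the repetition variables, and then to invoke the classical correspondence: a subset of $\Nset^k$ defined by a Presburger formula, in particular by a conjunction of linear Diophantine constraints, is exactly a computable semi-linear set.

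First I would fix, for each edge $e$ of the scheme carrying a particle of direction $u_e$, a repetition variable $n_e \in \Nset$, and introduce an auxiliary position variable $P_v \in \Zset^2$ for each vertex $v$, namely the placement of the corresponding collision. Laying out a particle repeated $n_e$ times between its two endpoint collisions forces a relation of the form $P_w - P_v = n_e\, u_e + c_e$, where $c_e \in \Zset^2$ is a fixed offset read off from the finite colorings of the two incident collisions. An affectation admits a consistent planar placement iff these edge relations are simultaneously satisfiable; eliminating the auxiliary $P_v$, which are free up to a global translation, leaves exactly the condition that, around every cycle of a fixed cycle basis of the planar map, the signed sum of the displacements $n_e u_e + c_e$ vanishes. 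Each such vector identity is a pair of linear Diophantine equations in the $n_e$, so the whole geometric-consistency requirement is a finite conjunction of them.

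Next I would address genuine local correctness, namely that the glued tiling satisfies $f$ everywhere and not merely that the collisions can be placed. By construction the transition rule already holds inside each background, each particle and each collision, so the only possible defects sit at the seams where two collisions are joined by a particle. Here I would use the separation properties built into the definitions: a particle cuts the plane into two $4$-connected background domains $L$ and $R$, and a collision is a star whose finite perturbation is confined to a bounded region. Hence, once $n_e$ exceeds an effectively computable threshold $N_e$ depending only on the diameters of the two incident finite colorings, the bounded perturbations of the endpoint collisions stay disjoint, the backgrounds flanking the edge match along the whole seam, and no unintended interaction occurs, so the glued object is a true space-time diagram. This contributes the finitely many linear inequalities $n_e \ge N_e$, together with the trivial $n_e \ge 0$.

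Finally I would conclude: the set of valid affectations is the set of $(n_e) \in \Nset^E$ satisfying the finite conjunction of cycle equations and length bounds obtained above. This is a Presburger-definable subset of $\Nset^E$, hence semi-linear; and because every ingredient, a cycle basis, the vectors $u_e$, the offsets $c_e$ and the bounds $N_e$, is computed explicitly from the finite scheme, one can effectively produce a base-and-period description, for instance by computing a Hilbert basis of the associated solution cone or by Presburger quantifier elimination. The hard part will be the third step: proving that local correctness at the seams reduces \emph{exactly} to the threshold bounds $n_e \ge N_e$, i.e.\ carefully invoking the $4$-connectedness and star conditions to guarantee that sufficiently long particles keep the two collisions' perturbations separated and the backgrounds coherent, so that no constraint beyond linear equalities and lower bounds is ever needed. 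The linear-algebra consistency of the first two steps is comparatively routine.
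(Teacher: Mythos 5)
The paper does not prove Theorem~\ref{th:cat} at all: it is quoted as an imported result from \cite{Ollinger-Richard:2007} and used as a black box, so there is no in-paper proof to compare your attempt against. Your strategy --- turn ``the resulting object is a space-time diagram'' into a finite system of linear Diophantine constraints on the repetition variables, then invoke the Ginsburg--Spanier correspondence between Presburger-definable subsets of $\Nset^k$ and computable semi-linear sets --- is indeed the standard route and is, in spirit, how the cited reference proceeds. The first two ingredients (position variables $P_v$, edge relations $P_w-P_v=n_e u_e+c_e$, elimination via cycle conditions on a cycle basis of the planar map) are sound, and since the repetition vectors $u_e$ are periods of the flanking backgrounds, consistency of positions does propagate phase-consistency of the backgrounds around each face.

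The genuine gap is exactly where you flagged it, and it is worse than you suggest: local correctness at the seams does \emph{not} reduce to per-edge thresholds $n_e\ge N_e$. The planar map only controls adjacency combinatorially; after embedding, two collisions (or particle segments) that are not adjacent in the map but lie on the boundary of a common face can collide geometrically, and keeping them apart is a constraint on a signed \emph{linear combination of several} $n_e$ along the face boundary, not a lower bound on a single variable. Worse, the cycle equations only guarantee that each face boundary closes up, not that it is a simple, correctly oriented closed curve bounding a nonempty region that the face's background can fill; ruling out degenerate or self-overlapping faces again requires inequalities coupling several repetition variables. None of this breaks the theorem --- all the needed conditions are still expressible by finitely many linear equalities and inequalities over $\Nset$, so the conclusion (semi-linear, effectively computable) survives --- but a complete proof must formulate and justify these face-level constraints, which is the actual technical content of \cite{Ollinger-Richard:2007} and is absent from your sketch.
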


In this paper, the constructed automaton is based on particles and
collisions and thus, heavily relies on the methodology used for
catenation scheme. However, since the particles and collisions are
explicitly constructed, they are very small and posses many good
combinatorial properties. Therefore we do not need the whole power of
catenation scheme and can often give simpler arguments for our
specific case. The rest of the paper is devoted to construct the
automaton and prove it is intrinsically universal. This is done in two
steps: first, we give the automaton and show, using catenation scheme,
that it can somehow ``simulate'' the local behavior of any
automaton. Then, we show how to assemble those local simulations into
a global one.

\section{The automaton and elementary block}
\label{sec:block}

In this section, we present the automaton and show how it can
``somehow'' simulate any local transition function of any one-way
cellular automaton in an \emph{elementary block}. Before going on with
this block, one point to notice is that it is sufficient to simulate
one-way cellular automata to be intrinsically universal. Therefore,
our elementary block is constructed such that it takes three inputs
(one transition function and two arguments) and outputs three values
(one transition function and two ``copies'' of the result).

To provide a good view of the automaton, all needed materials are
depicted in Fig.~\ref{fig:details} and Fig.~\ref{fig:global}. The
first figure gives the local transition function and particles,
whereas the second one gives the catenation scheme along with all
interesting extracts of corresponding space-time diagram. The rest of
this section is devoted to describe and explain the contents of these
two figures. The local transition function is fully depicted on the
top of Fig.~\ref{fig:details}. In fact, it is not a real transition
rule since some cases (denoted by question marks) are not used and
thus can take arbitrary values.


\begin{figure}[!p]
  \begin{center}
  \includegraphics[width=0.9\textwidth,height=0.9\textheight]{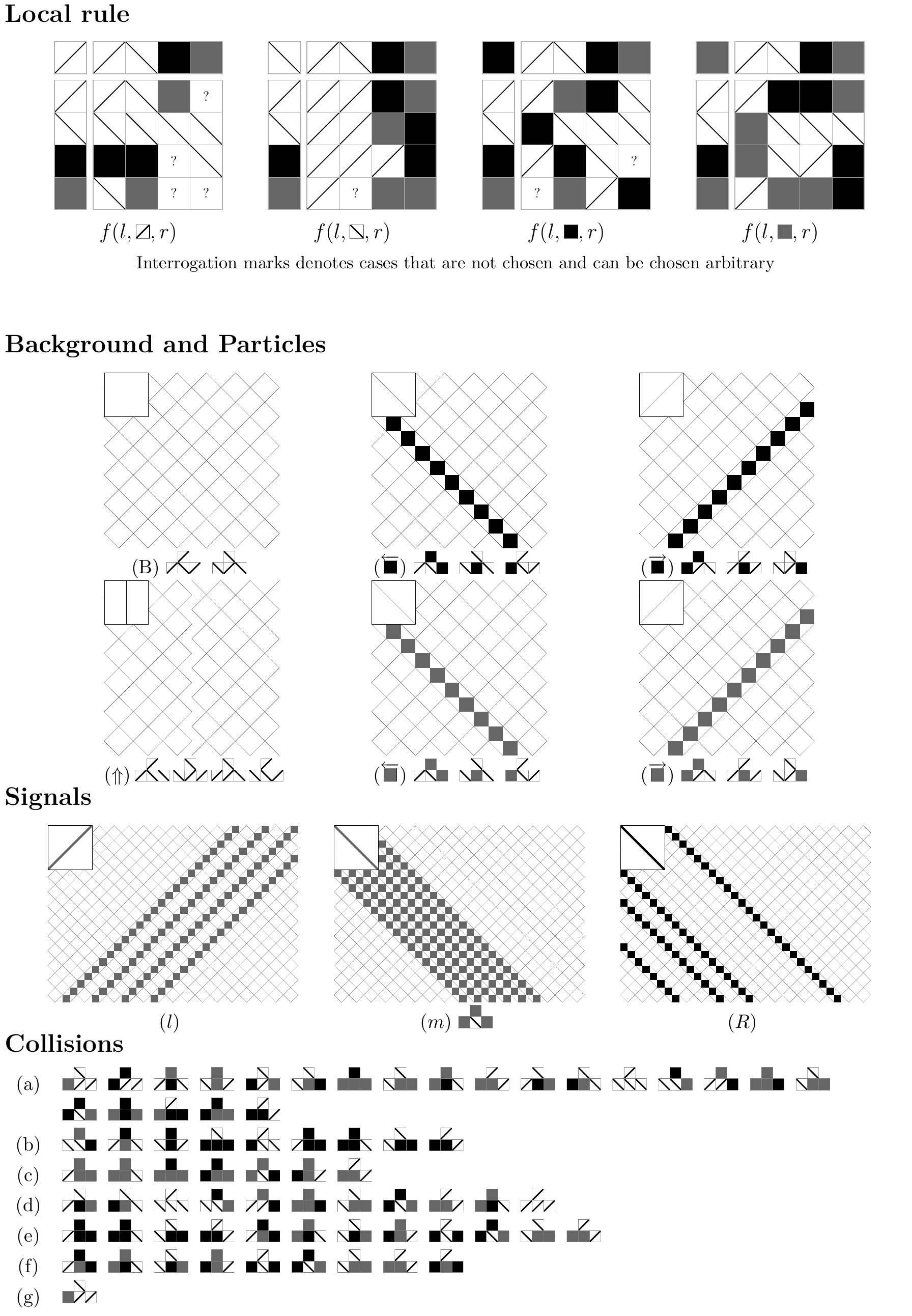} %
  \end{center}
  \caption{Local rule and particles}
  \label{fig:details}
\end{figure}

\begin{figure}[!p]
  \centering
  \includegraphics[width=\textwidth]{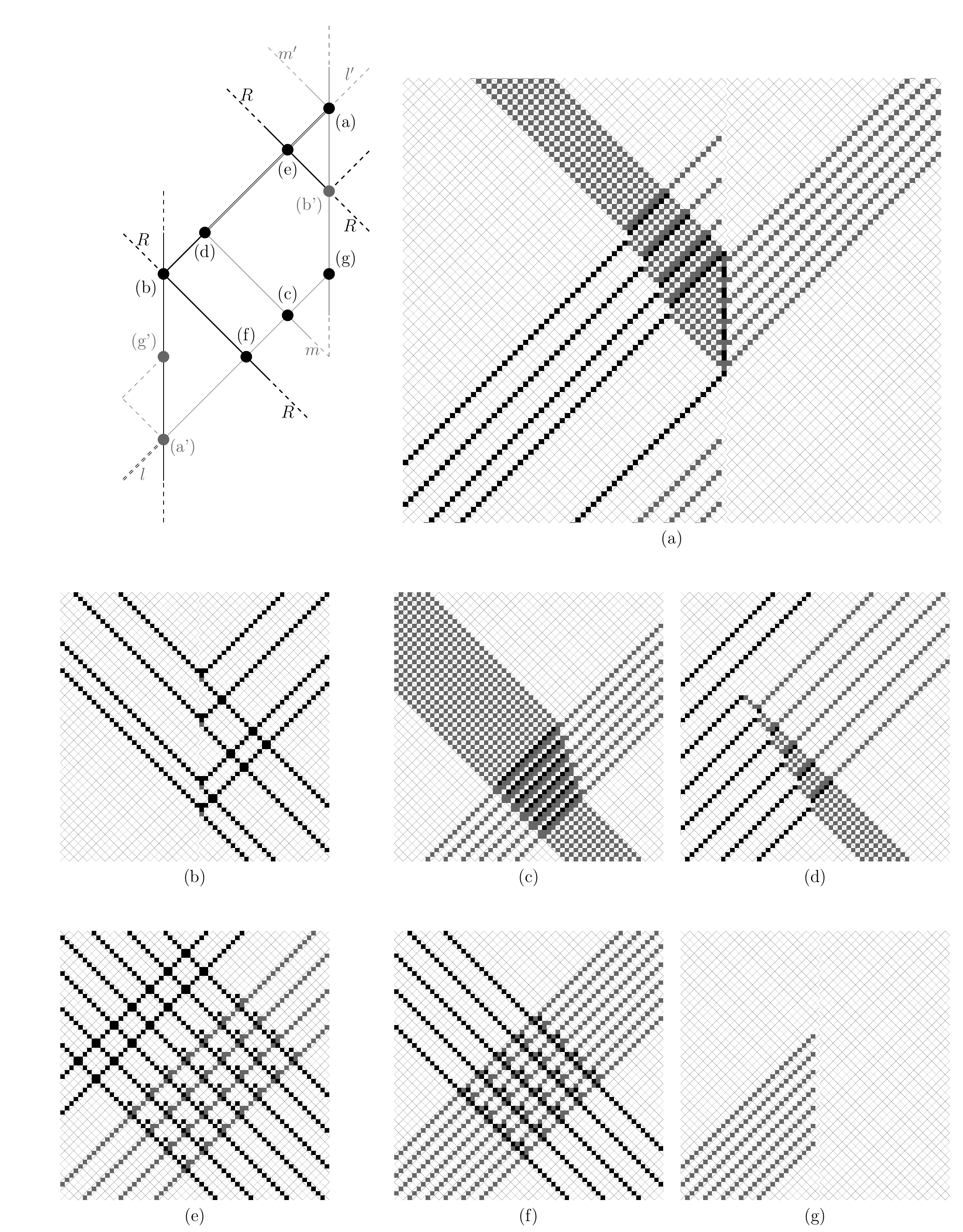}
  \caption{Constructed block and collisions}
  \label{fig:global}
\end{figure}

Even if giving the rule could be sufficient, the next part is devoted
to give intuitions behind local transition function to ease
understanding of the automaton. The first part is devoted to present
background and particles used in the construction. Those structures
are depicted in the corresponding part of Fig.~\ref{fig:details}. For
each structure, we give a meaningful extract of space time, its name
along with used local transition function cases. The formal definition
can be trivially extracted from those diagrams.

One requirement of our cellular automaton is that, contrary to other
known constructions, it does not use a uniform background but a
bi-colored check-board $(B)$. With this new approach, we need two
states (instead of one) for background but this allows us to have a
greater range of particles since each remaining state can lead to two
distinct particles according to its alignment within the
background. Thus, with the two remaining states, we can construct four
different particles $(\overleftarrow{\cbleu}, \overrightarrow{\cbleu},
\overleftarrow{\crouge}, \overrightarrow{\crouge})$. Furthermore,
since background has two different phases, one can construct an
additional particle $(\Uparrow)$ by taking advantage of the gap
between the two phases. Those are the main particles used in the
construction.

To encode information, the basic idea is to use groups of parallel
particles that we call \emph{signals}, depicted in the corresponding
section of Fig.~\ref{fig:details}. In these groups, information can be
encoded in two different ways: either by the number and type of the
particles used or by relative position between those particles. Here,
we use both approaches. As noted earlier, to simulate a on-way
cellular automaton, we need to encode three main types of information:
the \emph{left state} (i.e., state of the left neighbour) is encoded
in unary by the number of (regularly spaced) $\overrightarrow{\cbleu}$
particles (signal $l$). In a similar manner, the \emph{center state}
is encoded in unary by the number of (regularly spaced)
$\overleftarrow{\cbleu}$ particles (signal $m$). The \emph{local
  transition function} is encoded as an array of integers, with each
integer encoded by spaces between a pair of particles
$\overleftarrow{\crouge}$ (signal $R$): the $j$-th element of the
array corresponds to the space between the $j$-th and the $j+1$-th
particle of the signal. To be exact, the space is counted by the sum
of numbers of $\cblanc$ between two consecutive $\crouge$ and two (for
the previously mentioned states). This way of counting may seem a
little obscure but is chosen to give nice formulae in the end of this
section. In the rest of the paper, names of symbols are also used when
speaking of encoded values.

During computation, other kind of signals appear: First, a mirror
image of signal $R$ ($R'$) which encodes the same information using
$\overrightarrow{\crouge}$ particles (going in the opposite
direction). At last, an altered version of $R'$, which we denotes as
$\tilde{R}'$, appears. In this altered version, some of the leftmost
$\overrightarrow{\crouge}$ particles are replaced by
$\overrightarrow{\cbleu}$. Due to choice of background and particles,
our automaton has a special property which makes the proof a lot
easier. Contrary to many other cellular automata, we have no
synchronisation problems.
\pagebreak
\begin{lem}
  When two of the particles (or signals) described above collide,
  occurring collision is always the same.
\end{lem}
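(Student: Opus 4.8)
The plan is to reduce the statement to a finite, mechanical verification and to isolate the one genuinely structural point, namely that colliding particles cannot meet in more than one relative phase. First I would observe that the construction uses only finitely many particle types (the four particles $\overleftarrow{\cbleu}, \overrightarrow{\cbleu}, \overleftarrow{\crouge}, \overrightarrow{\crouge}$ together with $\Uparrow$, plus the variants occurring in the signals $R'$ and $\tilde{R}'$), so there are only finitely many unordered pairs whose velocity vectors differ enough to let the two particles approach one another. Since a collision $(\Ccal,L)$ is a local object --- its finite coloring $\Ccal$ and its list of outgoing particles are determined inside a bounded window of the space-time diagram --- it suffices to show that for each admissible incoming pair the contents of that window are uniquely determined.

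The heart of the argument is the assertion, phrased in the text as the absence of ``synchronisation problems'', that the relative position at which two approaching particles meet is forced. I would argue this from the defining data of the particles. Every particle $\Part = (\Ccal,u,\Back_l,\Back_r)$ is anchored to a specific phase of the checkerboard background $(B)$: its alignment within $B$ is part of its definition, which is precisely why each of the two non-background states yields two distinct particles according to its placement in $B$. Two approaching particles are separated by a region filled with $B$, and the coherence condition forces their two adjacent backgrounds to agree on this shared region. Because $B$ has period two, there is a unique way, up to the background's own translations, to superimpose the left half-plane background of one particle and the right half-plane background of the other on a common checkerboard; consequently the two particles can be placed coherently in a single relative alignment only. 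This phase-locking is what removes the dependence on the meeting offset that would otherwise produce several distinct collisions for one pair of incoming particles.

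With the relative alignment fixed, the outcome follows by evaluating the local transition function in the bounded neighbourhood of the meeting point. Since both the window and the alignment are now determined, applying $f$ is a finite deterministic computation whose result --- the perturbation $\Ccal$ and the emitted particles --- can be read directly from the rule table of Fig.~\ref{fig:details}. Carrying this out for each admissible pair exhibits in every case one and only one collision, which proves the lemma.

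I expect the main obstacle to be the rigorous justification of the phase-locking step. One must verify that the period-two structure of $B$, together with the particle vectors and the constraint that consecutive backgrounds agree, genuinely admits a single compatible relative offset rather than a family of them; this reduces to a congruence computation relating the background vectors $u,v$ to the particle direction vectors, and it is here that the specific, carefully chosen alignments of the particles within the checkerboard do the real work. Once this uniqueness is secured, the remaining case analysis is routine.
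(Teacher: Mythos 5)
Your proposal is correct and follows essentially the same route as the paper: the paper likewise reduces the claim to counting the possible relative positions of two colliding particles modulo their repetition vectors (which are $(1,1)$, $(0,2)$ and $(-1,1)$, a priori allowing two distinct collisions) and then observes that the period-two checkerboard background phase-locks the particles so that only one of these relative alignments can occur. Your ``congruence computation relating the background vectors to the particle direction vectors'' is exactly the index argument the paper invokes, citing Crutchfield et al.\ for the general framework.
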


\begin{proof}
  As studied by J.~P~Crutchfield {\it et al.}
  \cite{Hordijk-Shalizi-Crutchfield:2001}, the number of ways two
  particles (or signals) can collide depends on the relative position
  between those two particles. Possible relative positions takes into
  account repetition vectors of those particles and constraints
  induced by backgrounds. In our case, repetition vectors are either
  $(1,1)$, $(0,2)$ or $(-1,1)$ which suggests the possibility of two
  distinct collisions. However, the background forbids one of those
  possibility leaving only one possible case.
\end{proof}

With this very strong property, we can focus on symbolic behavior. Let
us go and construct the dynamic. The scheme of elementary block is
depicted in Fig.~\ref{fig:global} (along with extracts of all induced
collisions). The block is made the following way: left and right
borders are delimited by $\Uparrow$ particle. At the bottom, we have a
left state signal $l$, a rule $R$ and a center state signal $m$. The
left state signal is going through the rule (collision $f$) and then
collides with center state signal (collision $c$). This collision
outputs an unused copy of the left signal to the right. This signal is
erased when encountering the right border (collision $g$). Collision
$c$ also sends a signal encoding the the sum of left and center state
to the left. This new signal is encountering (collision $d$) the
mirror copy $R'$ (created when $R$ crossed the left border during
collision $b$). During collision $d$, as many particle of $R'$ are
altered\footnote{Last alteration is an erasing rather than a change of
  the particle but this does not change the behavior} than encoded
value (i.e., the sum of left and center signals). The signal embedding
the sum is destroyed by the collision whereas the altered rule
$\tilde{R}$ proceed to the right. After crossing another rule $R$
(collision $e$), the altered rule $\tilde{R}$ collides with the right
border and produces at the same time a new center state signal and a
left state signal (located right of the border) in collision $a$.

Additional cases needed for each collision in the local transition
function are made explicit at the bottom of
Fig.~\ref{fig:global}. This scheme gives us a symbolic block which
somehow ``computes'' the rule. Now, let us prove that this symbolic
behavior does really correspond to a valid space-time diagram and look
at the details at the computed function. The following proposition
deals with the first problem by ensuring that, under reasonable
conditions, the scheme of symbolic block is a valid space-time
diagram. Moreover, it gives some additional results on regularity of
this block.

\begin{prop}\label{prop:validity}
  For any encoded rule $R$ such that spaces between particles are even
  number greater that four, for any reasonable encoded value in left
  and middle state signal (i.e. both not null and such that their sum
  is less that the number of integers encoded in $R$) the scheme of
  the symbolic block correspond to a valid space-time
  diagram. Moreover, the size of the space time diagram does not
  depend on the encoded states and those blocks can tile the plane.
\end{prop}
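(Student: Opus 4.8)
The plan is to treat the symbolic block of Fig.~\ref{fig:global} as a finite catenation scheme and to exhibit, under the stated hypotheses, a \emph{valid affectation}: an assignment of non-negative integer repetition counts to every particle (edge) that makes the glued object a genuine space-time diagram. Since the preceding lemma guarantees that every pairwise collision is forced --- there are no synchronisation degrees of freedom --- the whole diagram is determined once the relative positions of the seven collisions $a,\dots,g$ are fixed. Thus validity reduces to a purely geometric closure question, and by Theorem~\ref{th:cat} the admissible affectations form a computable semi-linear set; it then suffices to check that the point prescribed by our parameters lies inside it.

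First I would assign coordinates, placing collision $b$ (the reflection of $R$ at the left $\Uparrow$ border into the rightward mirror $R'$) at the origin and propagating positions along particle trajectories using the three admissible repetition vectors $(1,1)$, $(0,2)$ and $(-1,1)$. Following the signals in the order in which they are created --- $l$ crossing $R$ at $f$, the collision of $l$ and $m$ at $c$, the sum signal of value $l+m$ travelling left to meet $R'$ at $d$, the altered rule crossing a copy of $R$ again at $e$, and finally reaching the right border at $a$ while the spurious copy is erased at $g$ --- each collision receives coordinates that are affine functions of $l$, $m$ and the entries of $R$. The closure conditions are then linear: the two $\Uparrow$ borders must sit a fixed horizontal distance apart, the sum signal must meet $R'$ at an existing particle, and the output signals emitted at $a$ must line up with the lattice of the background. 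I would write these down explicitly and solve the resulting Diophantine system.

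The main obstacle is the bookkeeping around collisions $d$ and $e$. At $d$ the incoming sum signal must alter \emph{exactly} $l+m$ of the $\overrightarrow{\crouge}$ particles of $R'$, which requires both that enough particles be present --- this is precisely the hypothesis that $l+m$ is strictly less than the number of integers encoded in $R$ --- and that the altered prefix stay well-formed, with evenly spaced particles, so that it can cross the second copy of $R$ at $e$ and reach the border at $a$ with the correct internal structure. This is where the requirement that the spaces of $R$ be \emph{even} and strictly greater than four enters: even spacing keeps each altered $\overrightarrow{\crouge}$ or $\overrightarrow{\cbleu}$ on the correct background phase, and the margin of four prevents neighbouring collisions from overlapping, so that each of the local collision extracts of Fig.~\ref{fig:global} really occurs as drawn. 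Granting these, the non-negativity of all repetition counts follows from $l,m\ge 1$, and the affectation is valid.

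Finally, for the size statement I would compute the width and height of the bounding region from the coordinates obtained above and observe that the contributions of $l$ and $m$ cancel: the horizontal displacement gained while the sum signal travels left to $d$ is recovered while the altered rule travels back right to $a$, so the overall extent depends only on $R$ (its number of particles and their spacings) and not on the individual states. Hence, for a fixed rule, every instance of the block is congruent. For the tiling claim I would note that the outputs produced at $a$ --- a fresh centre-state signal together with a left-state signal placed to the right of the border --- have exactly the form of the inputs of a block one step higher, while the left and right $\Uparrow$ borders are shared with the horizontally adjacent blocks; since all blocks for a given rule have identical dimensions, they assemble along these matching edges into a regular grid covering $\Zset^2$, which is the desired plane-tiling space-time diagram.
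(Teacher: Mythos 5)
Your proposal is correct in substance and reaches the same conclusion through the same basic skeleton (the uniqueness-of-collisions lemma, a case check of the hypotheses, then size and tiling), but the middle step is routed differently. You propose to assign explicit coordinates to the collisions $a,\dots,g$, derive the closure conditions as a linear Diophantine system, and appeal to Theorem~\ref{th:cat} to test membership of the resulting affectation in the semi-linear set of valid ones. The paper deliberately avoids this machinery: it observes that each collision extract in Fig.~\ref{fig:global} has a periodic internal perturbation, so validity is checked locally by verifying one period and the joints between periodic portions, and enlarging a signal only lengthens the periodic part. Your route is more systematic and would generalise to less well-behaved blocks; the paper's is lighter precisely because the particles were engineered to make it so. Two points where your account diverges from where the work actually lies: the paper pins the ``even and greater than four'' hypothesis on specific collisions --- the margin of four is needed for collision $b$ (the reflection of $R$ at the left border), not to keep $d$ and $e$ apart, and evenness is needed because the vertical portion of collision $a$ has periodicity $(0,4)$ --- whereas you attribute both to the bookkeeping around $d$ and $e$; if you carried out your computation you would discover the constraints surface at $b$ and $a$ instead. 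For the tiling, your ``outputs of one block are the inputs of the next, and all blocks are congruent'' argument is essentially right but implicitly assumes the output signals land exactly where the next block expects them; the paper closes this by noting that the borders are never displaced and the rule signal is only shifted by a constant at each border crossing, so the $b$-collisions form a regular lattice from which every other collision's position is determined. Neither divergence is a gap, but both are places where your plan would need the figure-level verification the paper supplies.
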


\begin{proof}
  First of all, the previous lemma ensures us that there is only one
  type of collision for each pair of particles; non null condition
  ensures that all particles exist. The proof that collisions are
  valid can be directly deduced from extracts given in
  Fig.~\ref{fig:global}. Due to the fact that constraints are local,
  periodicity considerations are sufficient to prove the validity of
  collisions occurring. In fact, it is sufficient to check the
  validity of one repetition of each periodic portion and the joint
  between different portion. The case of collisions $c$, $e$, $f$ and
  $g$ is easily get rid of since the perturbation inside the collision
  is periodic: adding one (or more) particles is the same as
  increasing the size of the periodic portion of the
  collision. Collision $b$ requires space between particles
  $\overleftarrow{\crouge}$ to be greater than four. For collision
  $d$, the constraint is just that there are at least one particles
  $\overrightarrow{\crouge}$ left (i.e. sum of left and right values
  is less that the number of integers encoded in $R$). The last
  significant point is in collision $a$: since the vertical portion in
  the collision has periodicity $(0,4)$, it requires that inter-space
  in $\tilde{R}$ being even (this inter-space is of course the same as
  in $R$). If all these constraints are respected (which is the case
  for our proposition) then the resulting catenation scheme can be
  implemented as a valid space-time diagram. That is, there exists a
  space-time diagram where particles and collisions are positioned in
  the same way as in the catenation scheme.

  Now let us fix $R$, this implies that encoded values in states
  signals are bounded. Therefore, all signals are of bounded size and,
  up to increasing the size of the cell, they can be considered as
  objects with negligible size. With this, it is sufficient to take
  relative positions of signal as in the scheme of
  Fig.~\ref{fig:global} to achieve same size blocks.

  The last point is to prove that such space-time diagrams, associated
  to elementary blocks, tile the plane. First remark is that symbolic
  blocks already tile the plane. Now if we look in details at
  Fig.~\ref{fig:global}, borders are left untouched by all collisions
  and rule signals are only shifted of a constant when encountering a
  border. This implies that crossings of rule signals and border
  (collisions $b$) form a regular grid on the plane. At last, position
  of all other collisions only depends on a small number of neighbour
  points on this grid. It is directly visible for $e$ and
  $a$. Positions of collisions $f$, $g$ and $c$ also depend on $a'$
  but its position is fixed by previous case.  At last, collision $d$
  depends on $c$ which was already treated. 
\end{proof}

With this result, we have an elementary block able to do simple
computation according to a rule $R$ and which can tile the
plane. Before combining those elementary blocks in the next section,
we must first study exactly which function is computed by our
block. For the same periodicity reason than previously, it is
sufficient to look at what happens in the case of collision $a$ in
Fig.~\ref{fig:global}.

\begin{lem}
  Let $R(i)_{1 \leq i \leq N}$ be the array of $N$ integers encoded in
  $R$ and $x_l$ (resp. $x_m$) be the one encoded in $l$ (resp. $m$);
  then the block leaves $R$ unchanged and outputs $m'$ and $l'$ with
  encoded values respectively $R(x_l+x_m) + x_l + x_m - (N + 1)$ and
  $R(x_l + x_m) / 2$. 
\end{lem}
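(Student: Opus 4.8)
The plan is to trace the signals through the collisions exactly as described in the block's symbolic dynamics, and then convert the combinatorial bookkeeping of particle counts and inter-particle spaces into the two claimed formulae. Since the previous proposition guarantees the symbolic block corresponds to a genuine space-time diagram, and the lemma on collision uniqueness guarantees each collision behaves deterministically, I can work entirely at the symbolic level with the catenation scheme of Fig.~\ref{fig:global}, reading off from collision~$a$ what the outputs $m'$ and $l'$ are. The fact that $R$ is left unchanged is essentially immediate from the construction: the rule signal $R$ only crosses borders (collisions $b$, and the symmetric re-crossing) and is merely shifted by a constant, so the array $R(i)_{1\le i\le N}$ it encodes is preserved.

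First I would follow the left state signal $l$ (encoding $x_l$, the number of $\overrightarrow{\cbleu}$ particles) and the center state signal $m$ (encoding $x_m$): they meet in collision~$c$, which produces a signal carrying the sum $x_l+x_m$ travelling leftward. This summed signal meets the mirror rule $R'$ in collision~$d$, where exactly $x_l+x_m$ of the leftmost $\overrightarrow{\crouge}$ particles of $R'$ are altered (replaced by $\overrightarrow{\cbleu}$, with a final erasing), yielding the altered rule $\tilde{R}'$. The key quantitative point is that this alteration consumes the first $x_l+x_m$ inter-particle spaces of the rule array, so that the particle that survives as the relevant boundary marker in $\tilde{R}$ corresponds to the index $x_l+x_m$, and its associated space is $R(x_l+x_m)$. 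This is why the ``reasonable value'' hypothesis $x_l+x_m \le N$ is needed: it guarantees at least one $\overrightarrow{\crouge}$ particle remains after the alteration, matching the validity condition stated for collision~$d$.

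The heart of the computation is collision~$a$, where the altered rule $\tilde{R}$ hits the right border and emits both a new center signal $m'$ and a new left signal $l'$. Here I would carefully count how the space $R(x_l+x_m)$, together with the accumulated alteration contributed by the $x_l+x_m$ consumed particles and the normalisation constant built into the encoding, distributes into the two emitted signals. The center output should be read as the residual count after subtracting the consumed particles and the offset $N+1$ coming from the total length of the array plus the ``two'' added in the space-counting convention, giving $R(x_l+x_m)+x_l+x_m-(N+1)$; the left output arises from the $\Uparrow$-border interaction halving the relevant span, giving $R(x_l+x_m)/2$. The main obstacle, and the step requiring the most care, is precisely pinning down these constants: the peculiar space-counting convention (summing $\cblanc$ counts between consecutive $\crouge$ and adding two) was explicitly chosen so that the arithmetic closes to the stated clean formulae, so I would verify by direct inspection of the periodic portions of collisions $a$ and $d$ in Fig.~\ref{fig:global} that the offsets combine exactly to $-(N+1)$ and that the halving in $l'$ is an integer under the stated evenness hypothesis on the inter-particle spaces of $R$.
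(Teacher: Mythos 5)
Your proposal follows the same route as the paper, which in fact offers no written proof of this lemma at all: the only justification given is the single preceding sentence that, ``for the same periodicity reason than previously,'' it suffices to examine what happens at collision $a$ in Fig.~\ref{fig:global}. Your tracing of the signals through collisions $c$, $d$ and $a$, and your account of where the offset $-(N+1)$ and the halving in $l'$ come from, is consistent with that one-line argument and supplies strictly more detail than the paper does, although, like the paper, it ultimately defers the decisive arithmetic to direct inspection of the periodic portions of the figure.
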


This block computation is alike a cell in a cellular automaton except
that it sends different values for left and right states. This
difference prevents us to give a direct simulation and require quite
additional work to get rid of this problem. A better way to overcome
this problem would be to alter the automaton or use unused cases in
the local transition function to ensure equality of outputs. However,
for the moment, we do not manage to do such a thing. Therefore we
present an alternative (and quite combinatorial) solution to use this
elementary blocks in intrinsic simulation in the next section.

\section{Simulation of cellular automata}
\label{sec:univ}

To achieve intrinsic universality with our cellular automaton, the
idea is to use the constructed elementary blocks and encode
information not directly but only on portions of integers. First step
is to remark that rather than working on only one block, we can work
on chain of two consecutive blocks chained by left output (see
Fig.~\ref{fig:chain}). The new constructed element has three inputs
(namely $x_l$,$x_m$, $x_{\tilde{m}}$) and three outputs $x_{l'}$,
$x_{m'}$ and $x_{\tilde{m}'}$ (not including rule). The key points is
that we see input as integers written in binary and work on bits of
these integers. The basic idea is to simulate a cell of a cellular
automaton using $x_l$ and $x_m$ as inputs, $x_{l'}$ and $x_{m'}$ as
outputs and maintaining the same constant value in all $x_{\tilde{m}}$
and $x_{\tilde{m'}}$.

\begin{figure}[htbp]
  \centering
  \includegraphics{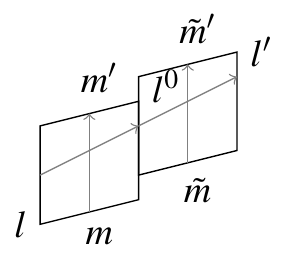}


  %
  %
  \caption{Symbolic chaining of elementary blocks (encoding of state
    is omitted)}
  \label{fig:chain}
\end{figure}

Let us take a one-way cellular automaton $(S,f)$. Since $f$ does not
depends on its third arguments, we can see it has an elements of $S^2
\to S$. Every element of $S$ can be seen as a non null integer and
written as a finite word of fixed size $k$ on binary alphabet denoted
by $s = s_0 s_1 \ldots s_{k-1}$. All values encoded in signals can be
seen as integers with $4 + 8k + 3$ bits (including leading zeros) as
described in table~\ref{tab:coding}. The first four one are called
\emph{header}, the last three one are called \emph{footer} and each
inside block of $8$ (called \emph{digit}) is used to encode one bit of
the state. The size of the rule $N$ is chosen as $2^{4+8k+3}-1$. Only
some parts of the signal for rule $R$ encodes information. Other parts
contain garbage which prevents a one-one encoding.

The method used to encode and decode is depicted in
Tab.~\ref{tab:coding}. The encoding of $l$ and $m$ contains some junk
but is chosen so that $x_l+x_m$ contains the whole information about
states encoded in $l$ and $m$. With this, it is possible to chose the
value of $R(x_l+x_m)$ according to these states. Half of the bits of
$R(x_l+x_m)$ are chosen to ensure the value of $x_{l^0}$. The other
half (denoted by symbol $\Tri$) is used to ensure correct values in
$x_{m'} = R(x_l+x_m) + x_l+ x_m - (N+1)$. In the second cell, one
observe $x_{l^0}+x_{\tilde{m}}$ which has a different header and thus
a different value of any valid $x_l+ x_m$. Therefore, one can chose
$R(x_{l^0}+x_m)$ as $N-l^0+1$ (note that this value is even). With
this choice, $x_{\tilde{m}'}= N - l^0 + 1 + l^0 + \tilde{m} - N - 1 =
x_{\tilde{m}}$ and $x_{l'}$ is on the correct form. With this
encoding, we can simulate any rule of cellular automaton inside our
cells. Leading to the following theorem:

\begin{figure}[htbp]
  \centering
  \begin{tabular}{ l | c  @{ ( } c c c c @{ ) } c  l }
    & header && \multicolumn{2}{c}{digits} && footer \\ 
    \hline
    $N$ & $1111$ & 
    $11$ & $11$ & $11$ & $11$ & 
    $111$ \\
    $x_l$ &  $0101$ & 
    $\bot\bot$ & $s_i0$ & $\bot\bot$ & $00$ &
    $\bot\bot\bot$ \\
    $x_m$ & $0000$ &
    $\bot\bot$ & $00$ & $\bot\bot$ & $s'_i0$ &
    $\bot\bot\bot$ \\
    $x_{\tilde{m}}$ & $1000$ &
    $00$ & $00$ & $00$ & $00$ & 
    $000$ \\
    \hline
    $x_l+x_m$ & $01xx$ &
    $\bot\bot$ & $s_i\bot$ & $\bot\bot$ & $s'_i\bot$ &
    $\bot\bot\bot$ & where $xx=01$ or $10$ \\
    $R(x_l+x_m)$ & $10\Tri\Tri$ &
    $\bar{t}_i1$ & $\Tri\Tri$ & $11$ & $\Tri\Tri$ & 
    $100$ \\
    $x_{m'}$ & $0000$ &
    $\bot\bot$ & $00$ & $\bot\bot$ & $t_i0$ &
    $\bot\bot\bot$ \\ 
    $x_{l^0}$ & $010\bot$ &
    $\bot\bar{t}_i$ & $1\bot$ & $\bot1$ & $1\bot$ &
    $\bot10$ \\
    $N - x_{l^0} +1 $ & $101\bot$ &
    $\bot t_i$ & $0\bot$ & $\bot0$ & $0\bot$  &
    $\bot10$ \\
    $x_{l'}$ & $0101$ &
    $\bot\bot$ & $t_i0$ & $\bot\bot$ & $00$ &
    $\bot\bot\bot$ \\
  \end{tabular}

  $\bot$ denotes arbitrary value and $\Tri$ value to be fixed.
  \caption{Encoding of states inside values}
  \label{tab:coding}
\end{figure}

\begin{thm}
  The cellular automaton presented in Fig~\ref{fig:details} is
  intrinsically universal.
\end{thm}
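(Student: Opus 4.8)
The plan is to assemble the machinery of Sections~\ref{sec:block} and~\ref{sec:univ} into a simulation satisfying the formal definition of simulation stated earlier. Since intrinsic universality follows from the ability to simulate every \emph{one-way} cellular automaton, I would fix such an automaton $(S,f)$, regard $f$ as a map $S^2 \to S$, encode each state as a $k$-bit word, and adopt the signal encoding of Table~\ref{tab:coding} with rule size $N = 2^{4+8k+3}-1$. The basic computational device is the chain of two elementary blocks of Fig.~\ref{fig:chain}, whose behavior on the three channels $x_l$, $x_m$, $x_{\tilde{m}}$ is completely pinned down by the block-computation Lemma, which expresses each output affinely in terms of $R(x_l+x_m)$.

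The heart of the argument is that the rule $R$ is a programmable parameter. First I would observe that the header bits of the encoding force $x_l+x_m$ to carry the full input-state information in its digit positions $s_i,s'_i$, so that distinct input pairs yield distinct admissible arguments $x_l+x_m$. I would then set the bits of $R(x_l+x_m)$: half of them fix $x_{l^0}$, hence after the second block the canonical output $x_{l'}$; the free bits marked $\Tri$ are chosen so that $x_{m'}=R(x_l+x_m)+x_l+x_m-(N+1)$ carries exactly the output digit $t_i$. The decisive point is that inside the second block the relevant argument is $x_{l^0}+x_{\tilde{m}}$, whose header differs from that of any legitimate $x_l+x_m$; this header separation lets me program $R$ independently at the two arguments, choosing $R(x_{l^0}+x_{\tilde{m}})=N-x_{l^0}+1$ (note this value is even) so that $x_{\tilde{m}'}=x_{\tilde{m}}$ is preserved while $x_{l'}$ emerges in the correct form. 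Verifying these substitutions row by row in Table~\ref{tab:coding} is routine but indispensable.

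Having established correctness for one chained block, I would invoke Proposition~\ref{prop:validity}: the programmed rule has even inter-particle spaces greater than four, the inputs are nonnull with bounded sum, so each block is a genuine space-time diagram of size independent of the encoded states, and such blocks tile the plane on a regular grid. This regularity is what upgrades a single local computation to a global simulation: one block-row realizes an entire line of $(S,f)$-cells, and successive rows iterate the transition. It then remains to read off the simulation parameters $m,m',t,t',s$ from the block geometry---$m$ from the number of cells holding one encoded state, $t$ from the height of a block-row, $s$ from the constant border shift of the rule signal---and to confirm $\sigma^s(F^t(e^{-1}(c))) \subset e^{-1}(G^{t'}(c))$, where surjectivity of $e$ (injectivity being explicitly waived) reflects the junk tolerated in the encoding.

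I expect the main obstacle to be the decoupling of the block's two outputs. A single block returns genuinely different left and right results, so no one-block-per-cell scheme can work directly; the entire construction rests on the chaining of Fig.~\ref{fig:chain} together with the header-based separation of arguments, and the delicate part will be confirming that the $\Tri$ bits, subject to the evenness constraint forced by collision $a$, still leave enough freedom to realize an arbitrary transition $f$ while simultaneously holding $x_{\tilde{m}}$ fixed across every block of the infinite tiling.
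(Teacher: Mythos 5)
Your proposal is correct and follows essentially the same route as the paper's proof: it uses the Table~\ref{tab:coding} encoding with the two-block chain of Fig.~\ref{fig:chain}, programs $R$ separately at the two header-distinguished arguments (choosing $R(x_{l^0}+x_{\tilde m})=N-x_{l^0}+1$ to preserve $x_{\tilde m}$), invokes Proposition~\ref{prop:validity} for validity and tiling, and handles the non-horizontal alignment via the shift parameter. Your write-up is in fact more explicit than the paper's own (very terse) proof, which simply defers to the preceding discussion for all of these points.
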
 

\begin{proof}
  With the previously presented encoding, chains of elementary blocks
  can encode any one-way cellular automaton local transition
  rule. Moreover, the encoding ensures conditions for
  proposition~\ref{prop:validity}. Thus constructed elementary blocks
  can tile the plane. If we associate to each cell of the simulated
  cellular automaton space-time diagram the corresponding (up to
  garbage consideration) chain of elementary blocks, we can simulate
  any other automaton. Last point is to note that even if the
  simulation is not horizontal, the problem can be easily overcome
  with standard trick. 
\end{proof}

\section*{Conclusion}
\label{sec:conc}

Our construction heavily relies on signals and manages to give a
construction with two separate levels: local rule is constructed to
ensure a set of particles and collisions. Then, the simulation is made
encoding the computation with this set. This method allows us to have
a clear and understandable construction which separates the local and
global aspect and gives us a new smallest intrinsically universal
cellular automaton.

\bibliographystyle{eptcs}



\end{document}